\newcommand{\vect}[1]{{\mathbf{#1}}}
\newtheorem{theorem}{Theorem}[section]
\newtheorem{lemma}[theorem]{\bf {Lemma}}
\newtheorem{therm}[theorem]{\bf {Theorem}}
\newenvironment{definition}[1][Definition]{\begin{trivlist}
\item[\hskip \labelsep {\bfseries #1}]}{\end{trivlist}}
\newenvironment{remark}[1][Remark]{\begin{trivlist}
\item[\hskip \labelsep {\bfseries #1}]}{\end{trivlist}}
\newcommand{\qed}{\nobreak \ifvmode \relax \else
      \ifdim\lastskip<1.5em \hskip-\lastskip
      \hskip1.5em plus0em minus0.5em \fi \nobreak
      \vrule height0.75em width0.5em depth0.25em\fi}
\begin{document}

\title{
Distributed Optimization of Multi-Cell Uplink Co-operation with Backhaul Constraints}
\author{Shirish Nagaraj  \\
Technology \& Innovation - Research \\
Nokia Networks \\Arlington Heights, IL 60004 \and Michael L. Honig, Khalid Zeineddine \\
Dept.\ of EECS
\\Northwestern University
\\Evanston, IL 60208
}

\maketitle

\begin{abstract}
We address the problem of uplink co-operative reception
with constraints on both backhaul bandwidth and the receiver aperture,
or number of antenna signals that can be processed.
The problem is cast as a network utility (weighted sum rate) maximization
subject to computational complexity and architectural 
bandwidth sharing constraints. 
We show that a relaxed version of the problem is convex,
and can be solved via a dual-decomposition.
The proposed solution is distributed in that each cell broadcasts 
a set of {\em demand prices} based on the data sharing requests they receive. 
Given the demand prices, the algorithm determines an antenna/cell ordering 
and antenna-selection for each scheduled user in a cell. 
This algorithm, referred to as {\em LiquidMAAS}, iterates 
between the preceding two steps. 
Simulations of realistic network scenarios show that the algorithm 
exhibits fast convergence even for systems with large number of cells.  
\end{abstract}


\section{Introduction}
\label{intro}
Uplink Co-ordinated Multi-Point (CoMP) is a promising technique
for increasing the capacity of 4G networks \cite{fettwiss,Karakayli,Patero,Liao}. 
The uplink is gaining increasing attention due to the dramatic increase
of user-generated data in the form of photos, videos and file-sharing. 
In practice, sharing of uplink received signals across cells is limited
by backhaul bandwidth. In addition, the receiver {\em aperture},
or number of signals from antennas at neighboring cell sites
that can be processed at a particular cell, may be limited due 
to hardware constraints.

Several approaches for signal sharing and combining have been proposed 
for uplink CoMP. A performance analysis of different combining methods
with different backhaul bandwidth requirements
is presented in \cite{fettwiss_marsh}.
To reduce the amount of sharing, dynamic clustering of cells has
been proposed, e.g., in \cite{Papadogiannis,giannakis}.
In addition to limiting the amount of information that can be
shared across cells, the limited backhaul bandwidth also introduces
latency, which is addressed in \cite{dp}.

Previous work on CoMP has generally assumed that the set of cells
that share information is fixed {\em a priori} by the topology and 
the bandwidth of the backhaul links. In this paper we relax this assumption
and consider the problem of {\em optimizing} {\em sets} of {\em helper}
cells that pass along their uplink signals to other cells.
(A cell can both share its signals as a helper cell
and receive signals, or help from other cells.)
We account for architectural constraints that limit the set of potential
helper cells (which differs from cell to cell), 
backhaul constraints that limit the number of cells
to which a helper cell can send its received signals ({\em egress} constraint), 
and hardware constraints at the cell site, which may limit
the number of incoming signals the cell can combine ({\em ingress} constraint). 

Related work in \cite{giannakis,weiyu}  has considered the problem of MMSE receiver estimation under compression and backhaul constraints. We focus here instead on a simpler sharing formulation that  introduces explicit constraints on egress bandwidth, to arrive at a convex weighted sum rate formulation with guaranteed convergence. Imposing these explicit egress bandwidth and ingress aperture constraints captures important architectural limitations in centralized or distributed co-operative networks.
Our problem is then to maximize a sum rate objective subject
to these architectural constraints. Since the cell sites are assumed to
have multiple antennas, we refer to this problem as
{\em Multi-Antenna Aperture Selection (MAAS)} for joint reception (JR)-CoMP
(see also \cite{khalid}).

The optimization of sets of helper cells is an integer program.
Assuming max-ratio combining of received signals and 
relaxing the integer constraints, the optimization problem becomes convex.
We present a distributed algorithm in which each helper cell 
announces an {\em egress price},
indicating the demand for its signals to help other cells, and each assisted
cell computes an {\em ingress price}, indicating the potential improvement
from adding a helper cell.
The prices are used to compute an ordering of users/cells,
which is then used to allocate the helper antennas across assisted cells.
This assignment is iterated with updates for the prices. 

We refer to this algorithm as {\em LiquidMAAS} due to its ability to flexibly allocate help based on network load conditions, and show that it converges 
to the optimal allocation of helper cells across the network.
Numerical results are presented that show that the algorithm converges
in relatively few iterations even for a system with a large number of cells. Furthermore, the gains relative to an {\em a priori} fixed allocation of helper cells can be substantial.

\section{Multi-Antenna Aperture Selection}
\label{num}
\subsection{Problem Setup}
Consider a network of $J$ cells, each with multiple antennas. Each cell $j$
serves a set of users $u(j)$. The uplink signal from each user $k \in u(j)$
is typically strongest at its serving cell; however, the signal
could also be received with significant strength at other cells
depending on the user location and network topology.
We assume the antenna-combined signal for a particular user $k \in u(j)$
can be shared with other cells, and is, of course,
provided to that user's serving cell for CoMP combining.

Figure (\ref{cell-topology}) shows an example of a network with 
uplink data sharing. There are four cells with seven users,
where $u(1) = \{1,2\}$, $u(2) = \{3\}$, $u(3) = \{4,5,6\}$, and $u(4) = \{7\}$.
Let ${\cal N}_R(k)$ denote the set of cells with
data that can be requested by the cell serving user $k$
({\em ingress} neighborhood). That typically corresponds
to the set of cells where the user's SINR is above 
a minimum threshold value (usually -10 dB). 
That is, ${\cal N}_R(k)=\{i \in {\cal J}, i \neq \sigma(k): 
S_{i \rightarrow j}^k \ge  S_{\min}\}$, where
$S_{i \rightarrow j}^k$ is the SINR at cell $i$
for a user $k$ in cell $j$ (i.e., $k\in u(j)$), and $\sigma(k)$
is the serving cell for user $k$. Note that ${\cal N}_R(k)$
may not be the actual set of helper cells for user $k$
due to backhaul and aperture constraints.

Also shown in Figure (\ref{cell-topology}) are the possible sharing variables 
along with the inter-connect. Note that cells 1 and 3, 
and cells 2 and 4 do not exchange information. 
The sharing variable 
$x_{i \rightarrow j}^k \in \{0,1\}$ indicates whether or not
cell $i$ shares data for user $k$ in cell $j$
over the backhaul link $i \rightarrow j$.
For the example shown in Figure (\ref{cell-topology}), the
local neighborhood sets for the different users are
\begin{eqnarray}
{\cal N}_R(1) = {\cal N}_R(2) = \{2, 4\};  && {\cal N}_R(3) = \{1, 3\}; \nonumber \\
{\cal N}_R(4) = {\cal N}_R(5) = {\cal N}_R(6) = \{2, 4\}; && {\cal N}_R(7) = \{1, 3\} \nonumber
\end{eqnarray}

The maximum receiver aperture size, or number of helper cells, 
for user $k$ is denoted as $L_R(k)$, and is the number of cells
in the neighborhood ${\cal N}_R (k)$. For a cell $i$, its {\em egress}
neighborhood to cell $j$ is denoted by ${\cal N}_T (i, j)$, and is the set of users in cell $j$ it can potentially help,
ignoring backhaul and aperture constraints. That is:
\[
{\cal N}_T (i, j) = \{k \in u(j): i \in {\cal N}_R(k)\} \ ; \ \ i, j \in {\cal J}, \ i \neq j
\]
For the example in Figure (\ref{cell-topology}), 
$L_R (k)=2  \ \forall k\in u(j), \ \forall j=\{1, 2, 3, 4\}$. 
The egress neighborhoods are:
\begin{align}
\nonumber
{\cal N}_T(1, 2) = \{3 \}; \ {\cal N}_T(1, 4) = \{7\}; & \ {\cal N}_T(2, 1) = \{1, 2\};  \\
\nonumber
{\cal N}_T(2, 3) = \{4, ,5, 6 \}; \ {\cal N}_T(3, 2) = \{3 \}; & \ {\cal N}_T(3, 4) = \{4, 5, 6\};  \\ 
\nonumber 
{\cal N}_T(4, 1) = \{1, 2\}; & \ {\cal N}_T(4, 3) = \{4, 5, 6 \}; 
\end{align}
Note that if there is no possibility of sharing between two cells, then the corresponding egress set is null, {\em i.e.,} ${\cal N}_T(2, 4) = {\cal N}_T(4, 2) = \emptyset
$. The notation is summarized in Table \ref{table}.


\begin{table}[ht]
\caption{Variable Definitions} 
\centering 
\begin{tabular}{|c|c|} 
\hline\hline
${\cal J}$ & Set of all cells in the network, with $J=|{\cal J}|$.  \\ \hline
$u(j)$ & Set of users connected to cell $j$. \\ \hline
$\sigma(k)$ & Serving cell for user $k$. \\ \hline
$x_{i \rightarrow j}^k$ & BW fraction that cell $i$ shares with user $k$ in cell $j$. \\ \hline
$S_{i \rightarrow j}^k$ & SINR experienced by user $k$ at cell $i$. \\ \hline
$\beta_k$ & Bandwidth fraction allocated to user $k$. \\ \hline
$\omega_k$ & Scheduling priority weight for user $k$. \\ \hline
$L_A$ & Maximum number of {\em helper} cells allowed. \\ \hline
${\cal N}_R(k)$ & Ingress neighborhood for user $k$ in cell $j$. \\ \hline
$L_R(k)$ & Maximum aperture of helper cells ($= |{\cal N}_R(k)|$). \\ \hline  
${\cal N}_T(i, j)$ & User egress neighborhood for cell $i$ to cell $j$. \\ \hline
$L_{\bar T}$ & Maximum per-cell egress bandwidth. \\ \hline
\end{tabular}
\label{table}
\end{table}

\begin{figure}[h]
\centerline{\includegraphics[height=70mm]{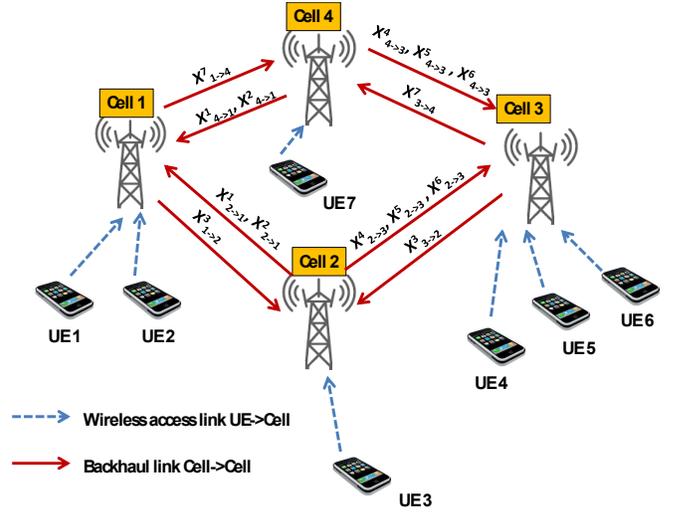}  
}
\caption{Example cell topology with backhaul inter-connects.}
\label{cell-topology}
\end{figure}

\subsection{Network Utility Maximization}
Our problem is to select the set of helper cells for each user $k$
to maximize an overall {\em network} objective, subject to the egress
(backhaul bandwidth) and ingress (aperture) constraints.
This selection is assumed to occur for a fixed
schedule of users in a given transmission time, and after
reception of the signals. 
We assume that a user is always processed by its serving cell's antennas. 
If another cell's antennas are included in the aperture (helper set) for a user,
then that cell performs local processing 
(e.g., Max-Ratio Combining (MRC) or Minimum Mean Squared Error
combining across its own antennas), and forwards those post-combined 
signals to the serving cell. That cell in turn performs MRC 
of the signals from different cells. 
Therefore, the combined SINR is the sum of the SINRs 
from the serving and helper cells. 

We therefore obtain the following optimization problem (NUM)
over the sharing variables $\{x_{i \rightarrow j}^k\}$:
\\ \\
{\bf Maximize}
\begin{eqnarray}
\sum_{j \in {\cal J}} \sum_{k \in u(j)} \omega_k \beta_k \log\left [ 1 + S_{j \rightarrow j}^k + \sum_{i \in {\cal N}_R(k)} \ S_{i \rightarrow j}^k \ x_{i \rightarrow j}^k \right ] && \label{num-cost} 
\end{eqnarray}
{\bf Subject to:}
\begin{align}
\sum_{i \in {\cal N}_R(k)} x_{i \rightarrow j}^k  & \le L_A \ \ \forall j \in {\cal J}, \ \forall k \in u(j) 
\label{ingress-constraint} \\ 
\sum_{j \in {\cal J}, j \neq i} \sum_{k \in {\cal N}_T(i, j)} \ \beta_k \ x_{i \rightarrow j}^k  & \le L_{\bar T} \ \forall i \in {\cal J} 
\label{egress-constraint} \\
x_{i \rightarrow j}^k \in \{0, 1\} \  \forall j \in {\cal J}, \ & \forall k \in u(j), \forall i \neq j 
\label{variable-constraint}
\end{align}
where $w_k$ and $\beta_k$ are defined in Table \ref{table}.

The cost function is concave and the inequality constraints are linear. 
If we relax the 
variables to be $0 \le x_{i \rightarrow j}^k \le 1$, the optimization problem is convex. 
A fractional value for the sharing variable has an interpretation 
of a helper cell sending only part of the resource block (RB) allocated
to a user. Since this is acceptable within the inter-connect architecture, 
and also for the receiver processing, 
we relax the variables to obtain a convex formulation.
In fact, we will show that in the solution to the relaxed problem,
for each user $k$,
all sharing variables except one satisfy the integer constraint.

The ingress constraint is a computational complexity constraint 
based on the maximum number of cell signals that can be processed 
by the receiver, and is independent of the allocated bandwidth fraction 
of the user\footnote{Even ignoring computational limitations,
limiting the number of signals that are combined can improve performance
due to finite training, associated channel estimation error,
and receiver imperfections \cite{khalid}.}. 
The egress constraint, on the other hand, is a backhaul
bandwidth constraint, and hence accounts for the fraction of bandwidth
a user's signal occupies. 


Note that the egress constraint (\ref{egress-constraint}) is the only 
coupling constraint. 
If that were absent, the solution would be straightforward: 
for each user $k \in u(j)$, order the cells according to the SINRs for that user,
and pick the $\min\{L_A, L_R(k)\}$ top cells for receiver processing.

\section{Characterization of Solution}
\label{solution}
We now present a distributed algorithm for solving the preceding NUM problem.
Since the relaxed problem is convex, we seek a set of sharing variables 
that satisfies the KKT conditions.
Define a combined SNR metric for user $k$ as
\begin{equation}
g(\vect{x}_k) = 1 + S_{j \rightarrow j}^k + \sum_{m \in {\cal N}_R(k)} \ S_{m \rightarrow j}^k \ x_{m \rightarrow j}^k
\end{equation}
where $\vect{x}_k = \{x_{i \rightarrow j}^k\}_{i \in {\cal N}_R(k)}$.
We can then write the Lagrangian for the NUM as
\begin{align}
\nonumber
{\cal L}(\vect{x}, \Lambda, \Psi, \Gamma, \Theta)=\sum_{j \in {\cal J}} \sum_{k \in u(j)} & 
{\cal L}_k(\vect{x}_k, \lambda_k, \Psi, \Gamma_k,\Theta_k) \\
 & + \sum_{i \in {\cal J}} \psi_i \ L_{\bar T}
\label{eq:lag}
\end{align}
where the cost function for user $k$ is
\begin{eqnarray}
&& {\cal L}_k (\vect{x}_k, \lambda_k, \Psi, \Gamma_k, \Theta_k) =  
\omega_k \beta_k \log \left [ g(\vect{x}_k) \right ] \nonumber \\
&& +  \lambda_{k} \left (L_A - \sum_{i \in {\cal N}_R(k)} x_{i \rightarrow j}^k \right ) -  \beta_k \ \sum_{i \in {\cal N}_R(k)} \psi_i  \ x_{i \rightarrow j}^k \nonumber \\
&& + \sum_{i \in {\cal N}_R(k)} \gamma_{i,k} \left (1 - x_{i \rightarrow j}^k \right ) + \sum_{i \in {\cal N}_R(k)} \theta_{i,k} x_{i \rightarrow j}^k, \label{lagrangian-peruser}
\end{eqnarray}
the Lagrange multipliers
$\psi_i, \lambda_{k}, \gamma_{i,k}, \theta_{i,k} \ge 0$,
$\{\Gamma_k, \Theta_k\}$ is the set of KKT multipliers per user,
and $\{\Psi, \Lambda, \Gamma, \Theta\}$ is the entire set 
of KKT multipliers. Note that to write the Lagrangian ${\cal L}$
this way, the egress constraint must be broken into partial sums of 
sharing variables for a particular user $k$:
\[
\sum_{i \in {\cal J}} \psi_i  \sum_{j \in {\cal J}, j \neq i} \sum_{k \in {\cal N}_T(i, j)} \beta_k x_{i \rightarrow j}^k \\
= \sum_{j \in {\cal J}} \sum_{k \in u(j)} \beta_k \sum_{i \in {\cal N}_R(k)}  \psi_i x_{i \rightarrow j}^k
\]

The KKT conditions for the NUM are then given by:
\begin{eqnarray}
\frac{\omega_k \beta_k S_{i \rightarrow j}^k}{g(\vect{x}_k)} 
-  \lambda_k - \psi_i \beta_k - \gamma_{i,k} + \theta_{i,k} = 0 && \label{derivative}  \label{KKT} \\
\psi_i \left (L_{\bar T} - \sum_{l \in {\cal J}, l \neq i} \sum_{m \in {\cal N}_T(i, l)} \beta_m \ x_{i \rightarrow l}^m \right ) = 0 && \\
\lambda_k \left (L_A - \sum_{n \in {\cal N}_R(k)} x_{n \rightarrow j}^k \right ) = 0 && \\\
\gamma_{i, j, k} \left (1 - x_{i \rightarrow j}^k \right ) = 0 && \\
\theta_{i,k} x_{i \rightarrow j}^k = 0 && \label{c-slack}
\end{eqnarray}
for every $(i, j) \in {\cal J}, i \neq j, k \in u(j)$. Thus the NUM decomposes into a set of local per-user optimization problems 
for a given a set of egress prices $\{\psi_i\}$.

We can now apply a dual decomposition to obtain a distributed iterative
algorithm for solving the NUM.
Namely, given set of shadow prices $\{\psi_i(t)\}$ and $\{\lambda_k(t)\}$ for the egress and ingress constraints 
(\ref{egress-constraint},\ref{ingress-constraint}) at iteration $t$, each user solves 
its cell selection problem locally by maximizing the objective 
${\cal L}_k(\vect{x}_k, \lambda_k(t), \Psi(t), \Gamma_k,\Theta_k)$. 
Define this {\em primal} solution to be 
$\{\vect{x}_k^*(t), \Gamma_k^*(t), \Theta_k^*(t)\}$. 
Then, the egress and ingress prices can be updated using the sub-gradient algorithm
\begin{eqnarray}
\psi_i(t+1) &=& \left [ \psi_i(t) - \nu \Delta_i(t) \right ]_+ \nonumber \\
\Delta_i(t) &=& \left (L_{\bar T} - \sum_{l \in {\cal J}, l \neq i} \sum_{m \in {\cal N}_T(i, l)} \beta_m \ x_{i \rightarrow l}^{m*}(t) \right ) 
\end{eqnarray}
\begin{equation}
\lambda_k(t+1) = [ \lambda_k(t) - \nu (L_R(k)-\sum_{n \in {\cal N}_R(k)}x_{n \rightarrow j}^{k*}(t)) ]_+
\end{equation}
where $\nu > 0$, is a small step-size, and $[y]_+ = \max\{y, 0 \}$. 


It remains to solve the primal problem in (\ref{lagrangian-peruser}). 
For this define the metric:
\begin{equation}
m_{i,k} = \frac{\beta_k \ \psi_i + \lambda_k}{\omega_k \ \beta_k \ S_{i \rightarrow j}^k}.
\label{metric}
\end{equation}

\begin{lemma} 
\label{lem-1}
If $m_{i,k} \neq m_{l,k}, \ \forall l \neq i$, $\omega_k >0$, $\beta_k > 0$, and
$S_{i \rightarrow j}^k > 0 \ \forall k$, then there can be at most one variable
in $\{x_{i \rightarrow j}^k\}_{i \in {\cal N}_R(k)}$ that has a fractional value. 
All others are either $0$ or $1$. 
\end{lemma}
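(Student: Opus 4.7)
The plan is to extract from the KKT stationarity equation a characterization of which coordinates can sit strictly inside $(0,1)$, and then use the distinctness of the metric values $m_{i,k}$ to rule out more than one such coordinate.

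First I would rearrange the stationarity condition (\ref{KKT}). Dividing through by $\omega_k \beta_k S_{i \rightarrow j}^k > 0$ gives
\begin{equation*}
\frac{1}{g(\vect{x}_k)} \;=\; \frac{\lambda_k + \psi_i \beta_k}{\omega_k \beta_k S_{i \rightarrow j}^k} \;+\; \frac{\gamma_{i,k} - \theta_{i,k}}{\omega_k \beta_k S_{i \rightarrow j}^k} \;=\; m_{i,k} \;+\; \frac{\gamma_{i,k} - \theta_{i,k}}{\omega_k \beta_k S_{i \rightarrow j}^k}.
\end{equation*}
This is the key identity: the deviation of $m_{i,k}$ from the common quantity $1/g(\vect{x}_k)$ is governed entirely by the bound-constraint multipliers $\gamma_{i,k}$ and $\theta_{i,k}$.

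Next I would invoke complementary slackness. If $x_{i \rightarrow j}^k \in (0,1)$ strictly, then from $\gamma_{i,k}(1 - x_{i \rightarrow j}^k) = 0$ and $\theta_{i,k} x_{i \rightarrow j}^k = 0$ we get $\gamma_{i,k} = \theta_{i,k} = 0$. Substituting back into the identity above yields
\begin{equation*}
m_{i,k} \;=\; \frac{1}{g(\vect{x}_k)}.
\end{equation*}
In other words, \emph{every} fractional coordinate must pin its metric to the single scalar $1/g(\vect{x}_k)$, which depends on $k$ but not on $i$.

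Finally I would conclude by contradiction: if two distinct indices $i \neq l$ in $\mathcal{N}_R(k)$ both gave fractional $x_{i\to j}^k, x_{l\to j}^k$, then both $m_{i,k}$ and $m_{l,k}$ would equal $1/g(\vect{x}_k)$, contradicting the hypothesis $m_{i,k} \neq m_{l,k}$. Hence at most one coordinate is fractional, and the remaining ones, being feasible for the box constraint, must take the extreme values $0$ or $1$.

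I do not anticipate a serious obstacle here: the positivity assumptions $\omega_k, \beta_k, S_{i\to j}^k > 0$ are exactly what is needed to divide through in the stationarity equation, and the argument is a clean application of complementary slackness. The only subtle point to state carefully is that the distinctness hypothesis is on the $m_{i,k}$ values \emph{evaluated at the current dual variables} $\lambda_k, \psi_i$, since these enter the definition (\ref{metric}); this should be flagged but requires no further work.
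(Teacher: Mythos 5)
Your argument is correct and is essentially identical to the paper's proof: the paper also divides the stationarity condition by $\omega_k \beta_k S_{i \rightarrow j}^k$ to obtain $1/g(\vect{x}_k) = m_{i,k} + \delta_{i,k}$ with $\delta_{i,k} = (\gamma_{i,k}-\theta_{i,k})/(\omega_k \beta_k S_{i \rightarrow j}^k)$, uses complementary slackness to force $\delta_{i,k}=0$ at any strictly fractional coordinate, and concludes that two fractional coordinates would force $m_{i,k}=m_{l,k}$, contradicting the distinctness hypothesis. No gaps.
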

\begin{proof}
Consider the ingress problem for user $k \in u(j)$, that is, 
maximization of the objective (\ref{lagrangian-peruser}). 
Assuming a fixed set of egress prices $\{\psi_i\}, \ \ i \in {\cal N}_R(k)$, 
we rewrite the KKT condition \eqref{derivative} as
\begin{equation}
\frac{1}{g(\vect{x}_k)} = m_{i,k} + \delta_{i,k} 
\label{key-opt-eqn}
\end{equation}
$\forall i \in {\cal N}_R(k)$ since $\omega_k \beta_k S_{i \rightarrow j}^k > 0$, 
and where 
\begin{equation}
\delta_{i,k} = \frac{\gamma_{i,k}-\theta_{i,k}}{\omega_k \beta_k S_{i \rightarrow j}^k}.
\end{equation}
It follows from the KKT conditions that
\begin{itemize}
\item if $x_{i \rightarrow j}^k = 1$, then $\gamma_{i,k} > 0$, $\theta_{i,k} = 0$, 
and $\delta_{i,k} > 0$; 
\item if $x_{i \rightarrow j}^k = 0$, then $\gamma_{i,k} = 0$, $\theta_{i,k} > 0$, 
and $\delta_{i,k} < 0$;
\item if $x_{i \rightarrow j}^k \in (0,1)$, then $\gamma_{i,k} = \theta_{i,k} = \delta_{i,k} = 0$.
\end{itemize}
 There cannot be two sharing variables
$x_{i \rightarrow j}^k$ and $x_{l \rightarrow j}^k, i \neq l$ that both
take fractional values in the optimal solution, since (\ref{key-opt-eqn}) 
cannot be satisfied for both $i$ and $l$ with $\delta_{i,k} = \delta_{l,k} = 0$ given that $m_{i,k} \neq m_{l,k} \ \ \forall \ l \neq i ;  \ i, l \in {\cal N}_R(k)$. 
\end{proof}
 
Next we characterize the solution to the primal problem for user $k$. 

\begin{definition}
\label{io_def}
 Let $\vect{x}_k^*$ be the optimal solution to \eqref{lagrangian-peruser}. Define the set of active variables in this solution to be ${\cal I}_{\mbox{{\small active}}}$, and the index of the cell with the maximum metric in this set as $i_o$: 
\begin{eqnarray}
{\cal I}_{\mbox{{\small active}}} &=& \{i \in {\cal N}_R(k): 0 < x_{i \rightarrow j}^{k*} \le 1 \} \nonumber \\
i_o &=& \arg \max_{i \in {\cal I}_{\mbox{{\small active}}}} m_{i,k} \label{io-def} 
\end{eqnarray}
\end{definition}

\begin{therm} 
\label{lem-2}
For a fixed $\lambda_k$ and $\psi_i$, $i \in {\cal N}_R(k)$, assume {\em w.l.o.g.} that $m_{i,k}, \{i=1, 2, 3, \ldots, L_R(k)\}$, is increasing for user $k$. Then, the sharing variables that maximize ${\cal L}_k$ in \eqref{lagrangian-peruser} are given by:
\begin{itemize}
\item $x_{i \rightarrow j}^{k*} = 1$ for $i < i_o$,
\item $x_{i \rightarrow j}^{k*} = 0$ for $i > i_o$,
\item $x_{i_o \rightarrow j}^{k*} \in (0, 1)$ if $\frac{1}{1 + S_{j \rightarrow j}^k + \sum_{m \le i_o} \ S_{m \rightarrow j}^k} > m_{i_o,k}$, else $x_{i_o \rightarrow j}^{k*}=1$. 
\end{itemize}
\end{therm}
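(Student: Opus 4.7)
The plan is to derive the structure of $\vect{x}_k^*$ entirely from the rewritten KKT stationarity $1/g(\vect{x}_k) = m_{i,k} + \delta_{i,k}$ in \eqref{key-opt-eqn}, together with the sign table for $\delta_{i,k}$ already assembled in the proof of Lemma~\ref{lem-1}. The key observation is that at the optimum, the left-hand side $1/g(\vect{x}_k^*) \equiv 1/g^*$ is a single scalar, so it acts as a common threshold against which every metric $m_{i,k}$ is compared.

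First I would translate the sign table into three sharp threshold statements. If $x_{i \rightarrow j}^{k*} = 1$ then $\delta_{i,k} \ge 0$ and therefore $m_{i,k} \le 1/g^*$; if $x_{i \rightarrow j}^{k*} = 0$ then $\delta_{i,k} \le 0$ and therefore $m_{i,k} \ge 1/g^*$; and if $x_{i \rightarrow j}^{k*} \in (0,1)$ then $\delta_{i,k}=0$ and therefore $m_{i,k} = 1/g^*$.

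Next I would combine these with the assumed strictly increasing ordering $m_{1,k} < m_{2,k} < \cdots < m_{L_R(k),k}$. Since $1/g^*$ is a single threshold and the metrics are monotone in $i$, the indices with $x_{i \rightarrow j}^{k*}=1$ must form an initial segment of $\{1,\ldots,L_R(k)\}$ and the indices with $x_{i \rightarrow j}^{k*}=0$ must form a final segment, with at most one fractional index separating them by Lemma~\ref{lem-1}. Definition~\ref{io_def} then pins the largest active index to $i_o$, which immediately fixes $x_{i \rightarrow j}^{k*}=1$ for all $i < i_o$ and $x_{i \rightarrow j}^{k*}=0$ for all $i > i_o$.

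The remaining and most delicate step is to identify the boundary value at $i_o$, and this is where I expect the main difficulty to lie. With the other variables now determined, $g^*$ is a function of $x_{i_o}^*$ alone: setting $x_{i_o}^*=1$ yields $g^* = 1 + S_{j \rightarrow j}^k + \sum_{m \le i_o} S_{m \rightarrow j}^k$, while any fractional choice forces $g^* = 1/m_{i_o,k}$ via the equality $\delta_{i_o,k}=0$. I would then compare $m_{i_o,k}$ against $1/(1 + S_{j \rightarrow j}^k + \sum_{m \le i_o} S_{m \rightarrow j}^k)$ and invoke the threshold statements from step one to decide which of the two candidates is KKT-feasible: exactly one of them will be consistent with the sign of $\delta_{i_o,k}$. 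The chief obstacle is bookkeeping of strict versus weak inequalities, and verifying that when the fractional branch is selected the resulting value actually lies in $(0,1)$ rather than at an endpoint; this should follow from $i_o$ being the \emph{largest} active index, so that the next metric $m_{i_o+1,k}$ (if any) lies strictly above $1/g^*$ and the fraction at $i_o$ is bounded away from $0$ by the contribution of the cells $i<i_o$ that are already fully on.
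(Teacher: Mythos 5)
Your proposal is correct and follows essentially the same route as the paper: both start from the rewritten KKT condition \eqref{key-opt-eqn} and the sign table from Lemma~\ref{lem-1}, and your ``single threshold $1/g^*$ versus the increasing metrics $m_{i,k}$'' argument is just a repackaging of the paper's strictly decreasing chain $\delta_{1,k}^* > \cdots > \delta_{L_R(k),k}^*$, yielding the same initial-segment conclusion ${\cal I}_{\mbox{\small active}} = \{1,\ldots,i_o\}$ and the same treatment of the fractional boundary variable at $i_o$. Your explicit one-dimensional analysis of the boundary case (comparing $m_{i_o,k}$ with $1/(1+S_{j\rightarrow j}^k+\sum_{m\le i_o}S_{m\rightarrow j}^k)$ and checking which branch is KKT-consistent) is slightly more careful than the paper's, which simply asserts the fractional value solves $1/g = m_{i_o,k}$, but it is not a different method.
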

\begin{proof}
Let the optimal shadow price difference be $\{\delta_{i,k}^*\}$ corresponding to the optimal solution $\vect{x}_k^*$. 
We show that ${\cal I}_{\mbox{{\small active}}}=\{1, 2, \ldots, i_o\}$. 

{\em PART I}: ${\cal I}_{\mbox{{\small active}}} \supseteq \{1, 2, \ldots, i_o\}$:

Since $m_{i,k}$ is increasing in $\{i=1, 2, 3, \ldots, L_R(k)\}$, it follows that to satisfy (\ref{key-opt-eqn}), 
\begin{equation}
\delta_{1,k}^* > \delta_{2,k}^* > \ldots > \delta_{i_o,k}^* > \ldots > \delta_{L_R(k),k}^* \label{delta-seq}
\end{equation}
Since $i_o \in {\cal I}_{\mbox{{\small active}}}, \delta_{i_o,k}^*  \ge 0$. 
Hence, from (\ref{delta-seq}), 
\begin{eqnarray}
\delta_{i,k}^* > 0 &\Rightarrow  & \  x_{i \rightarrow j}^{* k} = 1 \ \ \forall \ i = 1, 2, \ldots, i_o-1 \nonumber \\
\Rightarrow \{1, 2, \ldots, i_o\} &\subseteq & {\cal I}_{\mbox{{\small active}}}
\end{eqnarray}

{\em PART II}: ${\cal I}_{\mbox{{\small active}}} \subset \{1, 2, \ldots, i_o\}$:

For this, we show that $\{i_o+1, i_o+2,\ldots, L_R(k)\} \subseteq {\cal I}_{\mbox{{\small active}}}^c$, that is, $x_{ i \rightarrow j}^{* k} = 0 \ \forall i = i_o+1, \ldots L_R(k)$. This follows by contradiction. Assume that $i' \in {\cal I}_{\mbox{{\small active}}}$, and $i' > i_o$. By definition then $\delta_{i',k}^* \ge 0$. Since $i' > i_o$, we have $\delta_{i',k}^* < \delta_{i_o,k}^*$ from (\ref{delta-seq}). This implies $m_{i',k} > m_{i_o,k}$, which contradicts the definition of $i_o$ in (\ref{io-def}). 

From these, it follows that:
\[
{\cal I}_{\mbox{{\small active}}}= \{1, 2, \ldots, i_o\} 
\]
Further, the one variable that can have a fractional value is $x_{i_o \rightarrow j}^{* k}$ and found as a solution to $\frac{1}{1 + S_{j \rightarrow j}^k + \sum_{m < i_o} \ S_{m \rightarrow j}^k + S_{i_o \rightarrow j}^k x_{i_o \rightarrow j}^k} = m_{i_o,k}$.
\end{proof}

\begin{remark}
\label{rem-1} 
The value of $i_o$ can be then found by a search such that the KKT criterion in (\ref{key-opt-eqn}) is met with $\delta_{i,k}^* > 0$ for $i < i_o$, $\delta_{i,k}^* < 0$ for $i > i_o$ and $\delta_{i_o,k}^* \ge 0$ with some $0 < x_{i_o \rightarrow j}^{* k} \le 1$. 
\end{remark}

\section{LiquidMAAS Algorithm}
\label{liquid-maas}
The following algorithm is based on the preceding properties
of the solution.


\begin{enumerate}
\item {\bf Initialize egress (bandwidth demand) and ingress (aperture) prices}: 
\begin{enumerate}
\item $\psi_i = \varepsilon > 0 \ \forall i$
\item $\lambda_k = 0 \ \forall k \in u(j)$, $j \in {\cal J}$. 
\end{enumerate}
\item For each $k \in u(j)$, $j \in {\cal J}$, do:
\begin{enumerate}
\item {\bf Order helper cells:} For user $k$, order the helper cells in increasing value of metric $m_{i,k}$.
Let that ordered list be ${\cal I}_k = \{i_1, i_2, \ldots, i_{L_R(k)}\}$. 
\item {\bf Select aperture:} Initialize $x_{i \rightarrow j}^k = 0$ for $i= 1, \cdots , L_R (k)$. Add helper cells sequentially,
i.e., for $n=1, 2, \cdots$ add a new helper cell $n$ 
by setting $x_{n \rightarrow j}^k = 1$, and check if
\begin{equation}
\frac{1}{ \left [ 1 + S_{j \rightarrow j}^k + \sum_{p = 1, \ldots, n} \ S_{i_p \rightarrow j}^k  \right ]} > m_{i_n,k}
\label{aperture-selection}
\end{equation}
If true, then increment the aperture size to $n$. 
If false, then stop incrementing $n$ and set
\begin{equation}
x_{i_n\rightarrow j}^k = \max \left \{0, y \right \}
\end{equation}
where
\begin{equation}
y = \frac{\frac{1}{m_{i_n,k}} - [1 +  S_{j \rightarrow j}^k + \sum_{l \le n-1} \ S_{i_l \rightarrow j}^k]}{S_{i_n \rightarrow j}^k} \nonumber 
\end{equation}
and $y \leq 1$, corresponding to the fractional sharing variable.
\end{enumerate}
\item {\bf Update Ingress Price:}  
\[
\lambda_k \leftarrow \max\left(\lambda_k - \nu  [L_R(k) - \hat{L}_R(k)], 0\right)
\]
where $\hat{L}_R(k) = \sum_{i \in {\cal N}_R(k)} x_{i \rightarrow j}^k$, and $\nu > 0$ is a small step-size.
\item {\bf Update demand headroom:} Each cell computes
the egress demand headroom:
\begin{equation}
\Delta_i = L_{\bar T} - \sum_{l \in {\cal J}, l \neq i} \sum_{m \in {\cal N}_T(i, l)} \beta_m \ x_{i \rightarrow l}^{m}
\label{delta}
\end{equation}
\item {\bf Update demand prices:} 
\[
\psi_i \leftarrow \max(\psi_i - \nu \Delta_i, 0), 
\]  
\item {\bf Iterate:} Repeat steps (2)-(5) until convergence (i.e., all $|\hat{L}_R(k) - L_R(k)| < \varepsilon_1$ and $|\Delta_i| < \varepsilon_2$ for some sufficiently small $\varepsilon_1, \varepsilon_2 > 0$).   
\end{enumerate}

Given the tolerances $\varepsilon_1, \varepsilon_2 \rightarrow 0$, 
if $m_{i,k} \neq m_{l,k} \ \forall l \neq k$, $\omega_k >0, \beta_k > 0, S_{i \rightarrow j}^k > 0 \ \forall k$, then the LiquidMAAS algorithm
is guaranteed to converge, and the allocation
approaches the solution to the NUM problem in 
(\ref{num-cost}-\ref{variable-constraint}) as 
$\varepsilon_1, \varepsilon_2 \rightarrow 0$. 
This follows from the results in Lemma \ref{lem-1}, Theorem \ref{lem-2} and the subsequent remarks.
Furthermore, we note that the algorithm can identify the integer-valued
sharing variables in the solution within a finite number of steps
(provided that $\varepsilon_1, \varepsilon_2$ are sufficiently small).
The algorithm can be modified so that this optimality property
also applies to the case
where $m_{i,k} = m_{l,k}$ for some $l \neq k$; however, we omit the details here.


%

The algorithm is referred to as LiquidMAAS since it distributes the helper cell load across the network, ensuring that all cells are able to meet their egress bandwidth constraint, while maximizing network utility within the ingress (aperture) constraint. In contrast, if we do the traditional cell selection approach that considers only SINRs at different cells, there could be situations where the demand on certain cells could be very high, much exceeding the egress bandwidth. If a cell is in high demand {\em i.e.,} many cells request its antenna data, that cell's $\psi_i$ value will be high, indicating that there is a high price to be paid for getting that cell's antenna data. If there is little demand for a cell $i$'s data, then $\psi_i \approx 0$.

If all cells have the same demand, then the cell selection amounts to ordering purely by SINR. If the ingress price $\lambda_k \gg \psi_i \ \forall i$, then the ordering is approximately according to SINR. This ordering takes the potential help one can get from a helper cell in terms of SINR ($S_{i\rightarrow j}^k$) and weighs it against the price of obtaining it ($\psi_i$) and how tight the ingress constraint is ($\lambda_k$). So if we have two cells that give about the same help in terms of SINR, one would pick the cell that is less ``busy" or in demand (smaller $\psi_i$). 
If a cell is heavily in ``demand", some cells may choose to request from other cells that are in less demand, while still getting a reasonable improvement in SINR. 


The algorithm also automatically takes a user's priority ($\omega_k$) into consideration while making a decision on the user's aperture. This priority usually arises out of enforcing fairness for the different users via a utility such as proportional fairness. If a user is deemed to be high priority (large $\omega_k$), then the aperture selection in (\ref{aperture-selection}) becomes small for that user, biasing the selection of larger apertures for such users. This is in agreement with what we expect intuitively, {\em i.e.,} that high priority users should be able to pick larger apertures so as to obtain improved throughput.

\section{Simulation Results}
\label{sims}

We evaluated the LiquidMAAS algorithm for the case of a standard 57-cell layout with an average of $10$ users per cell. The simulation parameters are shown in Table \ref{table2}:
\begin{table}[t]
\renewcommand{\arraystretch}{1.3}
\caption{List of Simulation Parameters}
\label{table:sim_param}
\centering
\begin{tabular}{p{3.2cm}||p{4.3cm}}
\hline
\bfseries Simulation Parameter & \bfseries Values \\
\hline\hline
UL system bandwidth & 10 MHz\\ 
Carrier frequency & 2.0 GHz \\
Load average & 10 UE per cell \\
Inter-site distance & 100m \\
Receiver & MRC \\ 
Number of antennas per cell & 2 \\
Power control ($\alpha$, $Po$) & (0.8, -80 dBm)\\ 
Max transmit power & $24$ dBm \\
Noise figure & 4 dB  \\
Traffic model & Full buffer \\ 
\hline
\end{tabular}
\label{table2}
\end{table}
 
Users were allocated bandwidth equally based on the number of users connected to a given cell. A user's SINR to all cells in the network was calculated based on this bandwidth allocation, and open-loop fractional power control  \cite{lte_814} was assumed. The update for $\psi_i$ and $\lambda_k$ used a step-size of $\nu = 0.005$. Four algorithms were evaluated: {\em (1) No CoMP, (2) MAAS without egress constraints, (3) MAAS with a randomized egress bandwidth control mechanism, and (4) LiquidMAAS}. In case {\em (2)}, each cell requests help based on their aperture limit, and helper cells grant any requested help. In {\em (3)}, helper cells randomly grant requests for help until their egress constraint limit is reached.  

The ingress aperture constraint limit was set to $L_A = 3$, so that each user being gets a maximum of $3$ helper cells' data. The egress bandwidth limit $L_{\bar T}$ was varied to see the effect on performance and convergence. The top part of Figure  (\ref{fig2}) shows the convergence behavior of all cells' egress bandwidth demand (for $L_{\bar T}=1.0$), from which we observe that the algorithm converges in $\approx 50$ iterations. The convergence time is dependent on the step-size for updating the prices. It was observed that if the aperture and egress limits are similar, the step-size should be reduced to enable smoother convergence. The bottom part of Figure (\ref{fig2}) shows the distribution of weighted sum rate (WSR) gain obtained by the various CoMP approaches. This shows that LiquidMAAS gives substantial gain over 'No CoMP' and also over the randomized egress bandwidth control strategy. 

Figure (\ref{fig3}) shows the distribution of the egress bandwidth and ingress apertures at convergence. We observe that the LiquidMAAS algorithm maintains tight control over the egress bandwidth, and  adapts the ingress aperture to ensure that the constraints are met. Finally, Figure (\ref{fig4}) shows the average WSR gain (over 'No CoMP') as a function of the egress limit ($L_{\bar T}$). LiquidMAAS gives significant gains across the range of egress limits compared to the randomized strategy, and converges to the unconstrained MAAS solution for larger values of $L_{\bar T}$.



\begin{figure}[h]
\centerline{\includegraphics[height=95mm]{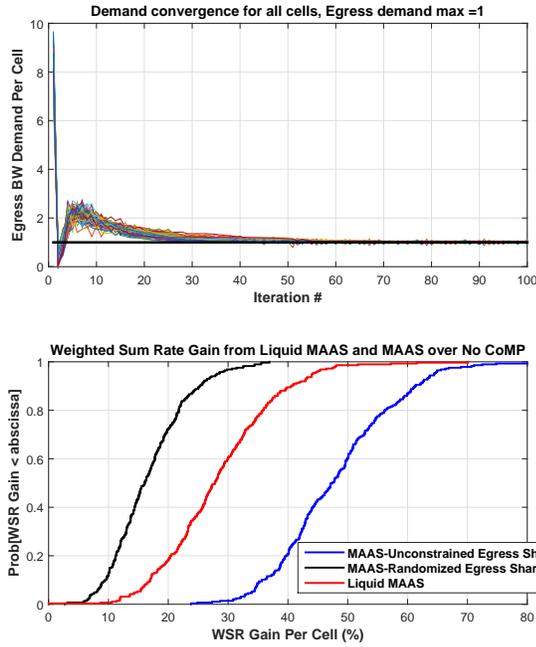}  
}
\caption{MAAS algorithm performance comparison and LiquidMAAS egress bandwidth convergence for $L_{\bar T} = 1.0$.}
\label{fig2}
\end{figure}

\begin{figure}[h]
\centerline{\includegraphics[height=95mm]{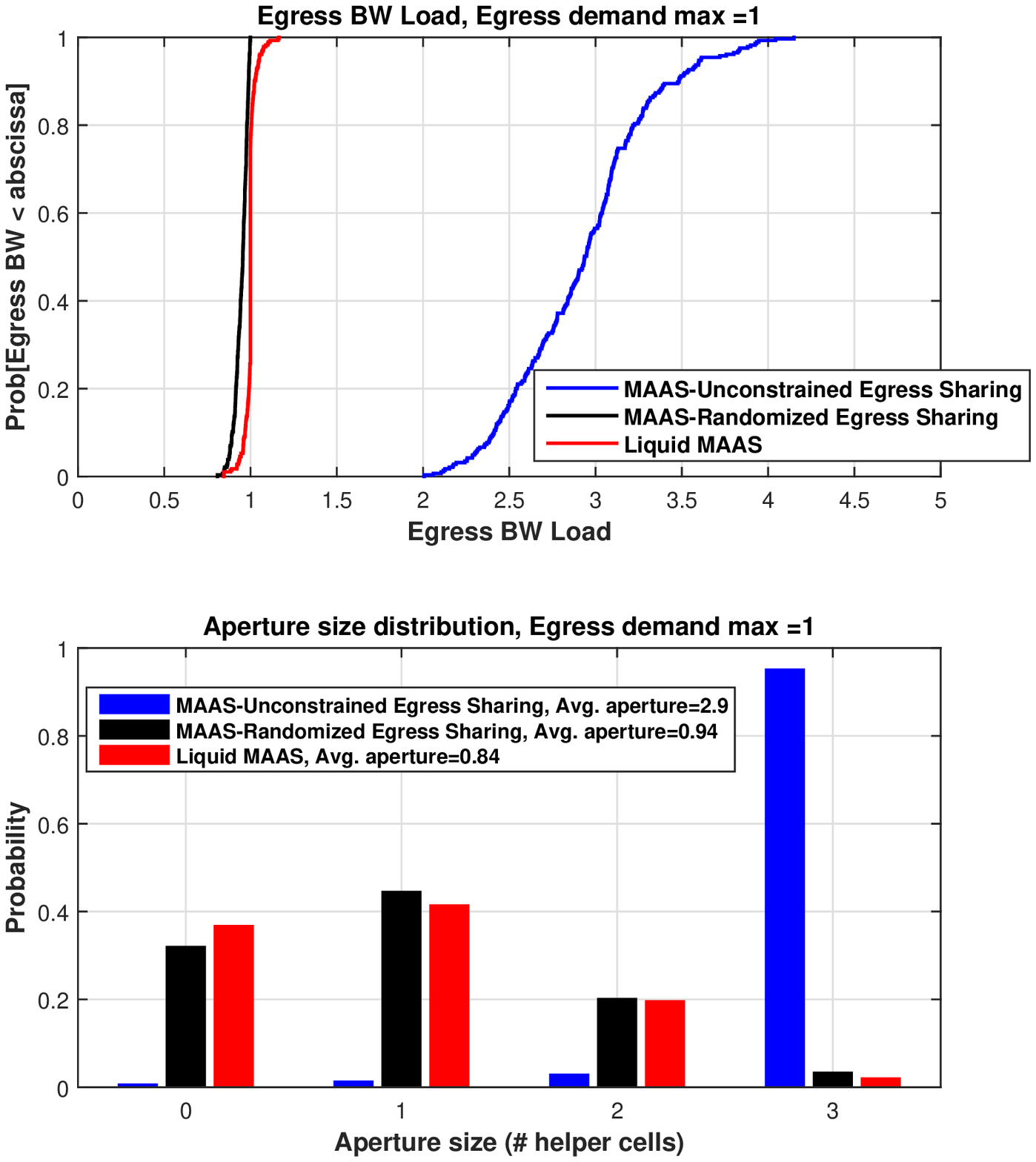}  
}
\caption{LiquidMAAS ingress aperture and egress bandwidth distribution for $L_{\bar T} = 1.0$.}
\label{fig3}
\end{figure}

\begin{figure}[h]
\centerline{\includegraphics[height=65mm]{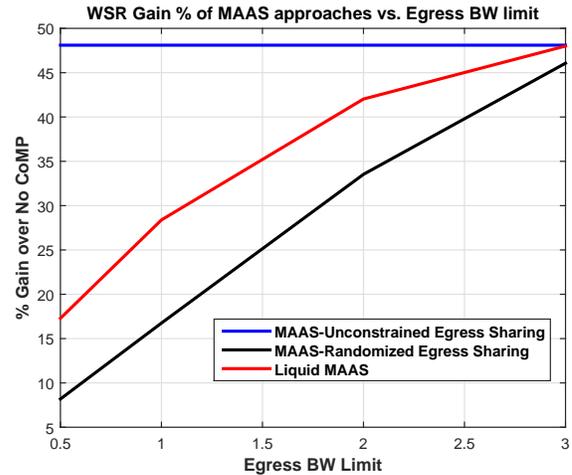}  
}
\caption{Algorithm performance comparison vs. egress bandwidth limit.}
\label{fig4}
\end{figure}




\section{Conclusions}
\label{conc}
In this paper, we presented an approach for joint algorithm and architecture optimization for co-operative communication networks. We considered the problem of uplink joint reception CoMP with backhaul bandwidth constraints, and showed that the resulting network utility maximization problem is convex. A distributed algorithm to solve this problem was presented, which consisted of local nodes computing their desired helper requests, followed by helper cells computing and updating an egress price for their bandwidth. At convergence, these iterations give a set of connections between helper and recipient cells, such that egress and ingress constraints are met at all cells, and overall network utility is maximized. Simulation results for a 57-cell system were presented to illustrate the efficacy of the approach, 
and show that the distributed dual decomposition-based gradient algorithm converges within tens of iterations.

\section*{Acknowledgements}
The authors acknowledge useful discussions with R. Agrawal, M. R. Raghavendra, P. Rasky, and C. Schmidt.

\end{document}